\def\01{\{0,1\}}
\newcommand{\rk}{\mathrm{rk}}
\newcommand{\ignore}[1]{}
\newtheorem{theorem}{Theorem}
\newtheorem{claim}[theorem]{Claim}
\newtheorem{definition}[theorem]{Definition}
\newtheorem{fact}[theorem]{Fact}
\newcommand{\thmref}[1]{\hyperref[#1]{{Theorem~\reff{#1}}}}
\newcommand{\lemref}[1]{\hyperref[#1]{{Lemma~\reff{#1}}}}
\newcommand{\corref}[1]{\hyperref[#1]{{Corollary~\reff{#1}}}}
\newcommand{\eqnref}[1]{\hyperref[#1]{{Equation~(\reff{#1})}}}
\newcommand{\factref}[1]{\hyperref[#1]{{Fact~\reff{#1}}}}
\newcommand{\defref}[1]{\hyperref[#1]{{Definition~\reff{#1}}}}
\newcommand{\secref}[1]{\hyperref[#1]{{Section~\reff{#1}}}}
\newcommand{\claimref}[1]{\hyperref[#1]{{Claim~\reff{#1}}}}
\begin{document}
\title{Rank and fooling set size}
\author{Aya Hamed \thanks{Work done in part while visiting the Centre for Quantum Technologies, Singapore} \and Troy Lee \thanks{Nanyang Technological University and Centre for Quantum 
Technologies, Singapore.  Research supported by a National Research Foundation Fellowship.}}
\maketitle

\begin{abstract}
Say that $A$ is a Hadamard factorization of the identity $I_n$ of size $n$ if $A \circ A^T = I_n$, where 
$\circ$ denotes the 
Hadamard or entrywise product, and $A^T$ denotes the transpose of $A$.  As $n=\rk(I_n)=\rk(A \circ A^T) \le \rk(A)^2$, it is clear that the rank of any Hadamard 
factorization of the identity must be at least $\sqrt{n}$.  Dietzfelbinger et al. \cite{DHS96} raised the question if 
this bound can be achieved, and showed a boolean Hadamard factorization of the identity with 
$\rk(A) \le n^{0.792\ldots}$.  
More recently, Klauck and Wolf \cite{KlauckWolf13} gave a construction of Hadamard factorizations of the identity 
of rank $n^{0.613\ldots}$.  Over finite fields, Friesen and Theis \cite{FriesenTheis13} resolved the question, 
showing for a prime $p$ and $r=p^t+1$ a Hadamard factorization of the identity $A$ of size $r(r-1)+1$ and 
$\rk_p(A)=r$, where $\rk_p(\cdot)$ indicates the rank over $\mathbb{F}_p$.  

Here we resolve the question for fields of zero characteristic, up to a constant factor, giving a construction of 
Hadamard factorizations of the identity of rank $r$ and size $\binom{r+1}{2}$.  The matrices in our construction are 
blockwise Toeplitz, and have entries whose magnitudes are binomial coefficients.  
\end{abstract}

\section{Introduction}
A fooling set for a matrix $M$ of size $n$ is a set of pairs $(i_1, j_1), \ldots, (i_n, j_n)$ such that $M(i_k,j_k) \ne 0$ 
for all $k=1, \ldots, n$ yet $M(i_k, j_\ell) M(i_\ell, j_k)=0$ for any $k \ne \ell$.  Fooling sets have been studied 
in several contexts, for example in communication complexity where they provide a lower bound on nondeterministic 
communication complexity (see \cite{KushilevitzNisan97}), and also in combinatorial optimization as a lower bound 
technique for extended formulation size \cite{Yannakakis91}.  

Notice that if $M$ has a fooling set of size $n$ then $M$ contains a submatrix $A$ of size $n$ such that 
$A \circ A^T$ is nonzero on the diagonal and zero in all off-diagonal entries.  Call $A$ a Hadamard factorization of the 
identity of size $n$ if $A \circ A^T=I_n$.  As $n=\rk(I_n)=\rk(A \circ A^T) \le \rk(A)^2$, it is clear that the rank of any 
Hadamard factorization of the identity must be at least $\sqrt{n}$.  Matrix rank is another common lower bound 
technique in communication complexity, and Dietzfelbinger et al.\ \cite{DHS96}  
investigated the question of how these measures compare.  They gave a construction of a family of boolean matrices 
$A_n$ that are Hadamard factorizations of the identity of size $n$ and of rank $n^{0.792\ldots}$, and asked the 
question if a quadratic separation can indeed be achieved. 

Klauck and Wolf \cite{KlauckWolf13} took up this question in a slightly different context.  For a boolean matrix $M$, 
let the 
nondeterministic rank of $M$ be the minimum rank of a matrix $A$ such that $A(i,j)=0$ whenever $M(i,j)=0$ and 
$A(i,j) \ne 0$ whenever $M(i,j)=1$.  Notice that the question of the relationship between the nondeterministic rank of 
$M$ and fooling set size is exactly the question of the rank of Hadamard factorizations of the identity.  Klauck and Wolf 
showed a family $A_n$ of Hadamard factorizations of the identity of size $n$ and $\rk(A_n)=n^{0.613\ldots}$, and 
again raised the question if there is a construction achieving exponent $1/2$.

Friesen and Theis \cite{FriesenTheis13} recently resolved the question of rank versus fooling set size over finite fields.  
For any prime 
$p$ and $r=p^t+1$ they give a Hadamard factorization of the identity $A$ of size $r(r-1)+1$ and $\rk_p(A)=r$, where 
$\rk_p(\cdot)$ denotes the rank over $\mathbb{F}_p$.  They ask if a similar result holds over fields of zero characteristic.  

We answer this question up to constant factors.  We give a construction of a Hadamard factorization of the identity 
of rank $r$ and size $\binom{r+1}{2}$.  The entries of our matrix are integers, and in fact binomial coefficients, up 
to sign.  The construction of Friesen and Theis also has entries that are binomial coefficients (mod $p$) and, as 
they do, we show the low rank property using the binomial addition identity.  While the matrix of Friesen and Theis is 
circulant, our matrix has a more complicated block structure but each block is a Toeplitz matrix. 

\section{Preliminaries}
We will make use of binomial coefficients and a few of their standard properties.  As multiple extensions of 
binomial coefficients to negative arguments are possible, we fix here the definition we use (following \cite{KGP94}).

\begin{definition}[Binomial Coefficients]
For integer $n,k$ define
\[
\binom{n}{k} = \begin{cases}
\frac{n(n-1) \cdots (n-k+1)}{k(k-1) \cdots 1} & \text{integer } k \ge 0 \\
0 & \text{integer } k <0 \enspace .
\end{cases}
\]
\end{definition}

\begin{fact}[Symmetry]
For any $n \ge 0$ and integer $k$
\[
\binom{n}{k}=\binom{n}{n-k} \enspace .
\]
\end{fact}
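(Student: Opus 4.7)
The plan is a routine case analysis on the integer $k$ relative to $n$, since the definition of $\binom{n}{k}$ given in the paper is piecewise in $k$ (and implicitly piecewise when the falling product hits zero).

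First I would handle the ``central'' case $0 \le k \le n$. Here both $k \ge 0$ and $n-k \ge 0$, so both sides are given by the first branch of the definition. Multiplying numerator and denominator of $\binom{n}{k} = \frac{n(n-1)\cdots(n-k+1)}{k!}$ by $(n-k)!$ rewrites both $\binom{n}{k}$ and $\binom{n}{n-k}$ as $\frac{n!}{k!(n-k)!}$, giving equality.

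Next I would dispose of the case $k < 0$. Here $\binom{n}{k} = 0$ by definition, so I need to verify $\binom{n}{n-k} = 0$. Since $n \ge 0$ and $k < 0$, we have $n - k \ge n+1 \ge 1$, so the first branch applies: the numerator is the falling product $n(n-1)\cdots(k+1)$, a product of $n-k$ consecutive integers descending from $n$. Because $k+1 \le 0 \le n$, the integer $0$ appears among these factors, so the numerator vanishes. The symmetric case $k > n$ (with $k \ge 0$) is handled the same way: now $\binom{n}{n-k} = 0$ because $n-k < 0$, and $\binom{n}{k}$ has numerator $n(n-1)\cdots(n-k+1)$, which contains $0$ as a factor because $n-k+1 \le 0 \le n$.

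I do not expect any real obstacle here; the only thing to be careful about is the asymmetry of the definition (only $k$, not $n-k$, is explicitly tested against zero in the case split), which is exactly why the cases $k<0$ and $k>n$ need the separate argument that the falling product contains a zero factor.
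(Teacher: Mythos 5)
Your proof is correct, but there is nothing in the paper to compare it against: the paper states this symmetry property as a \emph{Fact} in the Preliminaries, with no proof, treating it as a standard identity under the conventions of \cite{KGP94}. Your verification is the natural one and handles exactly the points that make the statement nontrivial under the paper's piecewise definition: in the range $0 \le k \le n$ both sides reduce to $\frac{n!}{k!\,(n-k)!}$; for $k<0$ the left side vanishes by the definition's second branch while the right side vanishes because the falling product $n(n-1)\cdots(k+1)$ straddles $0$ (using $k+1 \le 0 \le n$); and for $k>n$ the roles are swapped. It is worth noting that your argument makes visible why the hypothesis $n \ge 0$ is essential --- it is what guarantees that $0$ lies inside the falling products in the outer cases, and indeed symmetry fails for negative upper index (e.g.\ $\binom{-1}{0}=1$ but $\binom{-1}{-1}=0$ under this definition), a point the paper's bare statement leaves implicit. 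So your proposal fills a gap the paper deliberately left to the reader, and does so correctly.
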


\begin{fact}[Addition Formula]
For any integer $n,k$
\[
\binom{n}{k}=\binom{n-1}{k} + \binom{n-1}{k-1} \enspace .
\]
\end{fact}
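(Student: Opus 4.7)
The plan is to prove the addition formula by case analysis on the value of $k$, since the definition of $\binom{n}{k}$ in the paper treats $k\ge 0$ and $k<0$ differently. The three cases I would consider are $k<0$, $k=0$, and $k\ge 1$.

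For $k<0$, both $k$ and $k-1$ are negative, so by definition all three binomial coefficients $\binom{n}{k}$, $\binom{n-1}{k}$, and $\binom{n-1}{k-1}$ vanish, and the identity $0=0+0$ holds trivially. For $k=0$, I use that the empty product in the numerator and denominator of the definition gives $\binom{n}{0}=\binom{n-1}{0}=1$, while $\binom{n-1}{-1}=0$ by the second clause of the definition; thus both sides equal $1$. These boundary cases are where one must be careful, since negative arguments could cause the usual factorial-based manipulation to break down.

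The main case is $k\ge 1$, where I would apply the explicit product formula. Writing
\[
\binom{n-1}{k} = \frac{(n-1)(n-2)\cdots(n-k)}{k!},\qquad
\binom{n-1}{k-1} = \frac{(n-1)(n-2)\cdots(n-k+1)}{(k-1)!},
\]
I factor out the common numerator $(n-1)(n-2)\cdots(n-k+1)$ and place both terms over the common denominator $k!$, obtaining a factor of $(n-k)+k = n$ in the numerator. This collapses the sum to $\frac{n(n-1)\cdots(n-k+1)}{k!}=\binom{n}{k}$.

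The only real subtlety, and hence the main obstacle, is handling the case $k=0$ cleanly under the definition given: the algebraic manipulation of the $k\ge 1$ case does not quite apply (the $\binom{n-1}{k-1}$ term uses the negative-argument clause), so I need to separate it out rather than wave hands about empty products. Once the cases are partitioned this way, each is essentially immediate from the definition, and no property of $n$ beyond being an integer is needed, so the statement holds for all integers $n$ (not just $n\ge 0$ as in the symmetry fact).
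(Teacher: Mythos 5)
The paper offers no proof of this fact at all --- it is stated as a standard property of binomial coefficients, imported from \emph{Concrete Mathematics} \cite{KGP94} --- so there is no in-paper argument to compare yours against; what you have written is a correct, self-contained verification from the paper's definition. Your case split is the right one and each case checks out: for $k<0$ all three coefficients vanish by the second clause of the definition; for $k=0$ both $\binom{n}{0}$ and $\binom{n-1}{0}$ equal $1$ (quotients of empty products) while $\binom{n-1}{-1}=0$; and for $k\ge 1$ the computation
\[
\frac{(n-1)\cdots(n-k)}{k!}+\frac{(n-1)\cdots(n-k+1)}{(k-1)!}
=\frac{(n-1)\cdots(n-k+1)\,\bigl((n-k)+k\bigr)}{k!}
=\binom{n}{k}
\]
is a polynomial identity in $n$, hence valid for every integer $n$ with no sign restriction (when $k=1$ the extracted common factor is the empty product, and the step reads $(n-1)+1=n$, which your formula handles). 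Your closing observation is also accurate and not idle: the paper states symmetry only for $n\ge 0$ but needs the addition formula in full generality --- in the proof of \claimref{claim:recurrence} it is applied with lower indices like $i-j+1$ that can be negative, and it produces terms with upper index $-1$ such as $\binom{-1}{i-j}$ --- so covering all integers $n,k$, as you do, is exactly the generality the paper's later arguments rely on.
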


\section{Construction}
We now give a construction of a family of matrices $A_r$ of rank $r$ and size $\binom{r+1}{2}$, for any integer 
$r \ge 1$.  To get some feeling for these matrices, here are the first few examples
\[
A_1=\begin{bmatrix}
1
\end{bmatrix}, 
A_2 = \begin{bmatrix}
1 & 0 & 1\\
-1 & 1 & 0 \\
0 & 1 & 1\\
\end{bmatrix},
A_3 = \begin{bmatrix}
1 & 0 & 0 & 1 & -1 & 1\\
-1 & 1 & 0 & 0 & 1 & 0\\
1 & -1 & 1 & -1 & 0 & 0\\
0 & 1 & 0 & 1 & 0 & 1\\
0 & 0 & 1 & -1 & 1 & 0\\
0 & 1 & 1 & 0 & 1 & 1
\end{bmatrix} \enspace .
\] 

The recursive structure of $A_r$ can be seen from these examples \footnote{If the reader wants to see larger 
examples, Matlab code to construct $A_r$ can be found at \url{https://github.com/troyjlee/hadamard_factorization}}.  
In general, the top left $r$-by-$r$ principal 
submatrix of $A_r$ will be lower triangular with ones of alternating sign, and the bottom right $\binom{r}{2}$-sized
principal submatrix will be $A_{r-1}$.  

We now give the details of the construction.  First we define some auxiliary matrices $F_k$ for integer $k$ that will be 
used in the construction.  These can be 
thought of as infinite matrices, and we will use the notation $F_k^{s,t}$ to specify the $s$-by-$t$ matrix formed from 
the first $s$ rows and $t$ columns of $F_k$.  

\begin{definition}[$F_k$ matrices]
Let $k \in \mathbb{Z}$ and $i,j \in \mathbb{N}$.  The matrix $F_k$ is defined as
\[
F_k(i,j) = 
\begin{cases}
\binom{k-1}{j-i-1} & k >0 \\
(-1)^{j-i}\binom{-k-1+j-i}{-k-1} & k \le 0 \text{ and } i < j \\
(-1)^{i-j-k} \binom{i-j-1}{-k} & k \le 0 \text{ and } i \ge j
\end{cases}
\]
\end{definition} 
Notice that in each case the $(i,j)$ entry only depends on 
the difference $i-j$, thus each $F_k$ matrix is Toeplitz.  When $k>0$ we see that $F_k(i,j)=0$ whenever
$i \ge j$ meaning that these matrices are upper triangular.  When $k=0$ the definition simplifies 
to $F_0(i,j)=\binom{-1}{i-j}$, thus $F_0$ is lower triangular with ones on the 
main diagonal.  

To get a better idea where the $F_k$ come from, consider an extended Pascal's triangle where the upper and lower
indices begin from $-1$.  In the following table, the entries are binomial coefficients where upper indices label the rows, 
lower indices label the columns.
\begin{center}
 \begin{tabular}{c|cccccc}
 &-1 &0&1&2&3&4 \\
 \hline
 -1 & 0 &1 & -1 & 1 & -1&1\\
 0 & 0 &1 & 0 & 0 & 0&0\\
 1 & 0 &1& 1 &0 & 0&0 \\
 2& 0 & 1 & 2 & 1& 0&0\\
 3& 0 & 1 &3 & 3&1&0 \\
 4& 0 & 1 & 4 & 6 & 4 & 1
 \end{tabular}
 \end{center}
 The matrix $F_k$ for $k >0$ is the Toeplitz matrix whose first row is given by the row of Pascal's triangle indexed by 
 $k-1$, and whose first column is all zero.  For $k < 0$, up to signs, $F_k$ is a Toeplitz matrix whose first 
 column is given by the column of Pascal's triangle indexed by $-k$ and whose first row is given by the $-k-1$ column 
 of Pascal's triangle, starting from the row indexed by $-k-1$.

Using the $F_k$ we can now construct $A_r$.  
\begin{definition}[$A_r$ matrix]
For $r \ge 1$ let $A_r$ be a matrix of size $\binom{r+1}{2}$ defined as
 \[
  A_r = 
  \begin{bmatrix}
    F_0^{r,r} & F_{-1}^{r,r-1} & F_{-2}^{r,r-2} & \cdots & F_{-r+1}^{r,1} \\
    F_1^{r-1,r} & F_0^{r-1,r-1} & F_{-1}^{r-1,r-2} &  & F_{-r+2}^{r-1,1} \\
    F_2^{r-2,r} & F_1^{r-2,r-1} & F_0^{r-2,r-2} &  & F_{-r+3}^{r-2,1} \\
    \vdots &  &  & \ddots & \vdots \\
    F_{r-1}^{1,r} & F_{r-2}^{1,r-1} &\cdots  & \cdots & F_0^{1,1} \\
  \end{bmatrix}
\]
\end{definition}

The size of $A_r$ is clearly $\binom{r+1}{2}$.  
That $A_r$ is a Hadamard factorization of the identity and has rank $r$ will be shown in the next claims.  

We first show that $A_r$ is a Hadamard factorization of the identity.  This follows from the fact that $F_0$ is 
lower triangular and that in the above extended Pascal's triangle for $k > 0$ the row indexed by $k-1$ and 
column indexed by $k$ are disjoint.
\begin{claim}
$A_r$ is a Hadamard factorization of the identity.
\end{claim}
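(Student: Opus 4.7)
My plan is to exploit the block Toeplitz structure of $A_r$ and verify $A_r \circ A_r^T = I$ one block at a time. The starting observation is that the $(a,b)$ block of $A_r \circ A_r^T$ equals $F_{b-a}^{r-a+1,\,r-b+1} \circ \bigl(F_{a-b}^{r-b+1,\,r-a+1}\bigr)^T$, because the $(a,b)$ block of $A_r^T$ is the transpose of the $(b,a)$ block of $A_r$, and the dimensions line up correctly. Since the identity $I_{\binom{r+1}{2}}$ has identity blocks on the diagonal and zero blocks off the diagonal, it suffices to prove two facts: (i) $F_0 \circ F_0^T$ equals the identity of the appropriate size, and (ii) $F_k \circ (F_{-k})^T$ is the zero matrix of the appropriate size for every $k \neq 0$.

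For (i), I would read off from the definition that $F_0(i,j) = \binom{-1}{i-j}$, which equals $(-1)^{i-j}$ when $i \geq j$ and $0$ when $i < j$. Thus $F_0$ is lower triangular with unit diagonal. For $i \neq j$ at least one of $F_0(i,j)$ or $F_0(j,i)$ vanishes, while on the diagonal $F_0(i,i)^2 = 1$, so $F_0 \circ F_0^T = I$.

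For (ii), by transposing we may assume $k > 0$. From the definition, $F_k(i,j) = \binom{k-1}{j-i-1}$ is supported exactly on the band $1 \leq j - i \leq k$, so in particular $F_k(i,j) = 0$ whenever $j \leq i$. The $(i,j)$ entry of $(F_{-k})^T$ is $F_{-k}(j,i)$, and in the remaining region $j > i$ this falls in the ``row $\geq$ column'' branch of the definition, giving $(-1)^{j-i+k}\binom{j-i-1}{k}$; here the binomial coefficient $\binom{j-i-1}{k}$ vanishes precisely when the top argument lies in $\{0,1,\ldots,k-1\}$, i.e.\ exactly on $1 \leq j - i \leq k$. So wherever $F_k(i,j)$ is nonzero, $(F_{-k})^T(i,j)$ vanishes, and their Hadamard product is zero everywhere.

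The only real work is the bookkeeping in (ii), namely correctly passing the piecewise definition of $F_{-k}$ through the transpose and verifying that the nonzero band of $(F_{-k})^T$ in the upper triangle begins only at offset $k+1$ from the main diagonal, one step past where the band of $F_k$ stops. This is exactly the observation recorded in the text that the row indexed by $k-1$ and the column indexed by $k$ of the extended Pascal's triangle have disjoint positions. Once this support calculation is in hand, both (i) and (ii) follow with no further computation.
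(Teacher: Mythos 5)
Your proof is correct and takes essentially the same route as the paper: both reduce the claim to the block identities $F_0 \circ F_0^T = I$ (via lower-triangularity of $F_0$ with unit diagonal) and $F_k \circ (F_{-k})^T = 0$ for $k \neq 0$, the latter following because the support band $1 \le j-i \le k$ of $\binom{k-1}{j-i-1}$ is exactly where $\binom{j-i-1}{k}$ vanishes. One cosmetic slip: block $(a,b)$ of $A_r$ is $F_{a-b}^{r-a+1,\,r-b+1}$, not $F_{b-a}^{r-a+1,\,r-b+1}$, but since the statement being verified is symmetric under $k \mapsto -k$ this has no effect on the argument.
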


\begin{proof}
The diagonal entries of $A_r$ are $1$ as desired.  To show that $A_r(i,j)A_r(j,i)=0$ 
for $i \ne j$, it suffices to show that $F_k(i,j)F_{-k}(j,i)=0$ for each $k$.  This clearly holds for $k=0$ as 
$F_0$ is lower triangular.  Now suppose $k >0$.  If $i \ge j$ then $F_k(i,j)=0$ thus in this case we are also 
fine.  In the case $j > i$ we have
\[
|F_k(i,j) | | F_{-k}(j,i)| = \binom{k-1}{j-i-1} \binom{j-i-1}{k}=0 \enspace .
\]
The second term is zero for $j-i \le k$ while the first term is zero for $j-i \ge k+1$, thus the product is always zero.
\end{proof}
In fact, $A_r$ has the stronger property that exactly one of $A_r(i,j), A_r(j,i)$ is zero 
for $i \ne j$.  

The following claim is the key to prove $\rk(A_r) \le r$.
\begin{claim}
\label{claim:recurrence}
For any $k \in \mathbb{Z}$ and $i,j \in \mathbb{N}$
\[
F_k(i,j)=F_{k-1}(i,j) + F_{k-1}(i+1,j) \enspace .
\]
\end{claim}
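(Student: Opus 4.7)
The plan is to verify the recurrence by a case analysis on the sign of $k$ and on the relative positions of $i$, $i+1$, and $j$, reducing each case to the binomial addition formula stated in the Preliminaries. The definition of $F_k$ has three clauses, chosen according to whether $k$ is positive and, when $k \le 0$, whether $i < j$ or $i \ge j$. In most configurations all three terms appearing in the claim fall under the same clause, so the identity drops out of the addition formula; the only work is to identify the boundary configurations where the clauses on the two sides do not match and to check these by hand.

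When $k \ge 2$, both $k$ and $k-1$ are positive, so all three terms use the first clause. One reads off $F_k(i,j) = \binom{k-1}{j-i-1}$, $F_{k-1}(i,j) = \binom{k-2}{j-i-1}$, and $F_{k-1}(i+1,j) = \binom{k-2}{j-i-2}$, and the claim is exactly the addition formula.

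The case $k = 1$ is a boundary case because $F_1$ uses the first clause of the definition while $F_0$ uses the $k \le 0$ clauses. Here $F_1(i,j) = \binom{0}{j-i-1}$, which is $1$ when $j = i+1$ and $0$ otherwise, while $F_0(a,b) = \binom{-1}{a-b}$ equals $(-1)^{a-b}$ for $a \ge b$ and $0$ for $a < b$. Partitioning into $j > i+1$, $j = i+1$, and $j \le i$ and checking each subcase directly settles this case.

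For $k \le 0$ the recurrence splits into three subcases. If $i+1 < j$, all three terms use the $i<j$ clause, and after dividing by the common sign $(-1)^{j-i}$ the identity reduces to $\binom{j-i-k-1}{-k-1} = \binom{j-i-k}{-k} - \binom{j-i-k-1}{-k}$, which is the addition formula rearranged. If $i \ge j$, all three terms use the $i \ge j$ clause, and again the signs match up and the identity follows from the addition formula. The main obstacle is the remaining subcase $i+1 = j$: here $F_k(i,j)$ and $F_{k-1}(i,j)$ fall under the $i<j$ clause, but $F_{k-1}(i+1,j) = F_{k-1}(j,j)$ falls under the $i\ge j$ clause, so the addition formula cannot be invoked directly. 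One computes by hand, using the identity $\binom{-1}{m} = (-1)^m$ for $m \ge 0$, that $F_{k-1}(i,i+1) = k-1$, $F_{k-1}(i+1,i+1) = 1$, and $F_k(i,i+1) = k$, and the trivial identity $k = (k-1)+1$ completes the argument.
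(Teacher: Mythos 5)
Your proof is correct and follows essentially the same route as the paper's: the same case decomposition ($k\ge 2$, $k=1$, $k\le 0$ with subcases on $i$ versus $j$), with the binomial addition formula carrying each generic case. The only difference is cosmetic---where the paper dispatches the two boundary situations algebraically (the symmetry identity for $k=1$, and the observation that the two defining formulas of $F_{k-1}$ agree at $i=j$ for the subcase $i+1=j$), you verify them by direct evaluation, which is equally valid.
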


\begin{proof}
We break the proof into $3$ cases depending on the value of $k$.  

\paragraph{Case 1 : $k >1$}
This case follows from the binomial addition formula
\begin{align*}
F_k(i,j)=\binom{k-1}{j-i-1}&=\binom{k-2}{j-i-1} + \binom{k-2}{j-i-2} \\
&=F_{k-1}(i,j) + F_{k-1}(i+1,j) \enspace.
\end{align*}

\paragraph{Case 2: $k=1$} In this case we use the symmetry identity together with binomial addition 
formula.  
\begin{align*}
F_1(i,j)=\binom{0}{j-i-1} = \binom{0}{i-j+1}&=\binom{-1}{i-j} + \binom{-1}{i-j+1} \\
& = F_0(i,j)+F_0(i+1,j) \enspace.
\end{align*} 

\paragraph{Case 3: $k \le 0$}
First consider the case $i \ge j$.  Then again by the binomial addition formula
\begin{align*}
F_{k}(i,j)&=
(-1)^{i-j-k}\binom{i-j-1}{-k} \\
&=(-1)^{i-j-k} \left(-\binom{i-j-1}{-k+1}+\binom{i-j}{-k+1} \right) \\
&=(-1)^{i-j-k+1}\binom{i-j-1}{-k+1}+(-1)^{i-j-k+2}\binom{i-j}{-k+1} \\
&=F_{k-1}(i,j) + F_{k-1}(i+1,j) \enspace .
\end{align*}

Finally, consider the case $i<j$.  This case requires some care as it could be that $i+1=j$.  For $k<0$, however, 
notice that the two formulas defining $F_k$ agree when $i=j$.  The first gives
$(-1)^{j-i}$ and the second $(-1)^{i-j-k} (-1)^{-k}=(-1)^{j-i}$.  Thus when $k<0$ and $i=j$ the two 
formulas in the definition are consistent.  As we are in Case 3, we are safe expressing $F_{k-1}(i+1,j)$
using the formula for $i < j$ as $k \le 0$.  
\begin{align*}
F_k(i,j)&=(-1)^{j-i}\binom{-k-1+j-i}{-k-1} \\
&=(-1)^{j-i} \left( \binom{-k+j-i}{-k} - \binom{-k+j-i-1}{-k}\right) \\
&=(-1)^{j-i} \binom{-k+j-i}{-k} + (-1)^{j-i-1}\binom{-k+j-i-1}{-k} \\
&=F_{k-1}(i,j) + F_{k-1}(i+1,j) \enspace. \\
\end{align*}
\end{proof}

\begin{claim}
The rank of $A_r$ is $r$. 
\end{claim}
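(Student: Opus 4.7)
The plan is to prove the two inequalities $\rk(A_r) \le r$ and $\rk(A_r) \ge r$ separately. The easy direction is the lower bound: the top-left $r \times r$ principal submatrix of $A_r$ is the block $F_0^{r,r}$, and since $F_0(i,j) = \binom{-1}{i-j}$ equals $1$ when $i=j$ and $0$ when $i<j$, this submatrix is lower triangular with ones on the diagonal and hence has rank $r$. Note in particular that the generic Hadamard-factorization bound $\rk(A_r) \ge \sqrt{\binom{r+1}{2}}$ falls short of the required $r$, so one really needs this direct argument.

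For the upper bound I would use \claimref{claim:recurrence} to show that every row of $A_r$ is a linear combination of the $r$ rows constituting the first block row. Index a row of $A_r$ by a pair $(s,i)$ where $s \in \{1,\ldots,r\}$ selects a block row and $i \in \{1,\ldots,r-s+1\}$ picks a row within it, and similarly index a column by $(t,j)$. The block structure of $A_r$ then says
\[
A_r[(s,i),(t,j)] = F_{s-t}(i,j) \enspace .
\]
Applying \claimref{claim:recurrence} with $k = s+1-t$ gives
\[
F_{s+1-t}(i,j) = F_{s-t}(i,j) + F_{s-t}(i+1,j)
\]
for every column index $(t,j)$. Read uniformly across all $(t,j)$ this is a single row identity
\[
\mathrm{row}\,(s+1,i) = \mathrm{row}\,(s,i) + \mathrm{row}\,(s,i+1),
\]
valid for $1 \le s \le r-1$ and $1 \le i \le r-s$. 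A straightforward induction on $s$ then expresses every row in block rows $2,\ldots,r$ as a linear combination of rows in block row $1$, yielding $\rk(A_r) \le r$.

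The main thing to check is that the index $(s,i+1)$ appearing on the right-hand side of the row identity really lies in the row range of block row $s$: as $i$ ranges over $\{1,\ldots,r-s\}$, the row range of the shorter block row $s+1$, one has $i+1 \le r-s+1$, which is exactly the upper end of block row $s$'s row range. Once this bookkeeping is in place, the argument reduces cleanly to \claimref{claim:recurrence}, and I do not expect any further obstacles.
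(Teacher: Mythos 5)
Your proposal is correct and follows the same route as the paper: the lower bound from the unitriangular block $F_0^{r,r}$, and the upper bound by using \claimref{claim:recurrence} to write each row of block row $s+1$ as the sum of two rows of block row $s$, hence every row as a linear combination of the first $r$ rows. Your explicit indexing and the check that $(s,i+1)$ stays in range simply fill in bookkeeping the paper leaves implicit.
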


\begin{proof}
The rank of $A_r$ is at least $r$ as the submatrix $F_0^{r,r}$ has rank $r$.  
\claimref{claim:recurrence} 
shows that all rows of $A_r$ can be expressed as linear combinations of the first $r$ rows, thus also $\rk(A_r) \le r$.
\end{proof}

\section{Conclusion}
We have shown a family of matrices $A_r$ of rank $r$ that are Hadamard factorizations of the identity of size 
$\binom{r+1}{2}$.  This construction is optimal in terms of rank, up to a constant factor, and answers an open question 
of Klauck and Wolf \cite{KlauckWolf13} and Friesen and Theis \cite{FriesenTheis13}.  

The original question of Dietzfelbinger et al.\ \cite{DHS96} remains open, however, as they specifically ask for the 
construction of a {\em boolean} matrix.  Our matrix has entries that are (positive and negative) 
integers.  Currently, the best known separation between rank and fooling set size for a boolean matrix is due to 
Theis \cite{Theis11}, who shows a boolean Hadamard factorization of the identity of size $n$ and rank 
$n^{\log(4)/\log(6)}$.  A potentially easier question would be to find a rank optimal Hadamard factorization of the 
identity that has only nonnegative entries.  Finally, the construction of Friesen and Theis in the finite field case 
gives matrices that are circulant.  How small can the rank over the reals be of a circulant Hadamard factorization of 
the identity?

\section*{Acknowledgements}
We would like to thank Dirk Oliver Theis for sharing this problem in the open problem session of the 
Dagstuhl seminar 13082 ``Communication Complexity, Linear Optimization, and lower bounds for the nonnegative 
rank of matrices''.  
We would also like to thank Andris Ambainis, Hartmut Klauck, and Dirk Oliver Theis for encouragement and helpful 
conversations during the course of this work.


\end{document}